\documentclass[conference,twocolumn,10pt]{IEEEtran} 

\usepackage{epsfig,cite}
\usepackage{graphicx}
\usepackage{color}
\usepackage{amssymb,amsmath}

\IEEEoverridecommandlockouts

\newtheorem{theorem}{Theorem}

\newtheorem{lemma}{Lemma}

\def\given{\:|\:}

\def\Pr{{\mathrm{Pr}}}

\begin{document}

\title{Scaling Laws for Molecular Communication}

\author{
	\authorblockN{Andrew W. Eckford}
	\authorblockA{Dept. of Electrical Engineering and Computer Science\\
	York University\\
	Toronto, Ontario, Canada M3J 1P3\\
	Email: aeckford@yorku.ca}
	\and
	\authorblockN{Chan-Byoung Chae}
	\authorblockA{School of Integrated Technology\\
	Yonsei University, Korea\\
	Email: cbchae@yonsei.ac.kr}
}

\maketitle

\begin{abstract}
In this paper, we investigate information-theoretic scaling laws, independent from communication strategies, for point-to-point molecular communication, where it sends/receives information-encoded molecules between nanomachines. Since the Shannon capacity for this is still an open problem, we first derive an asymptotic order in a single coordinate, i.e., i) scaling time with constant number of molecules $m$ and ii) scaling molecules with constant time $t$. For a single coordinate case, we show that the asymptotic scaling is logarithmic in either coordinate, i.e., $\Theta(\log t)$ and $\Theta(\log m)$, respectively. We also study asymptotic behavior of scaling in both time and molecules and show that, if molecules and time are proportional to each other, then 
the asymptotic scaling is linear, i.e., $\Theta(t)=\Theta(m)$. \\

\end{abstract}

\begin{keywords}
Molecular communication, scaling laws, channel capacity.
\end{keywords}


\section{Introduction}

In molecular communication, a transmitter expresses a message in molecules, which propagate towards the receiver via Brownian motion, or some similar means \cite{hiy05}. Molecular communication is found in biological processes such as signal transduction \cite{ein11,eck13}; it has also been proposed as an enabling technology for nanoscale systems \cite{par09}. For this new paradigm of communication, several papers have tried to address the achievable rates (defined as `bits per symbol') of the communication system under theoretical channel and noise assumptions~\cite{JSAC_10,Kuran_NCN10,Chae_JSAC_13}. The author in~\cite{JSAC_10} evaluated, using a circuit model, the normalized gain and delay of the system. The authors in~\cite{Kuran_NCN10,Chae_JSAC_13} studied extensively the basics of molecular communication via diffusion. In~\cite{Kuran_NCN10}, they investigated a new energy model to understand how much energy is required to transmit messenger molecules and \cite{Chae_JSAC_13} introduced several modulation techniques. The authors in~\cite{Chae_JSAC_13} also compared, by using a simple symmetric channel model, the achievable rates. However, most prior work on molecular communication has focused on proposing and analyzing (practical) transmission strategies with theoretical assumptions to achieve higher achievable rates.

The investigation of fundamental capacity limits of molecular communication is still an open problem in information theory. It consistently, however, attracts the attention from researchers since understanding the fundamental limits may provide practical insights. Calculation of mutual information in molecular communication channel is known to be a hard problem. Say there are $m$ molecules, numbered $\{1,2,\ldots,m\}$,
where the $i$th molecule is released at time $x_i$. This molecule takes $n_i$ seconds to 
propagate to the receiver, and arrives at time $x_i + n_i$. So far this looks like a simple additive noise channel -- but the trick is that the molecule arriving at time
$x_i+n_i$ might not be the $i$th molecule to arrive. If the molecules are indistinguishable, then the
releases and arrivals form an order-statistical distribution, which involves a sum over terms for every possible permutation from inputs to outputs (see, e.g., \cite{bap89,eck07}).
For these reasons, unlike better-known channels, we know very little about the Shannon capacity of molecular communication. 
The state of our ignorance about capacity in this channel is such that
it is not even clear what are the right units in which to measure capacity: bits per second? Bits per molecule? Bits per second per molecule? 

While transmission strategies are now relatively well understood~\cite{Chae_JSAC_13}, knowledge about the information-theoretic performance limits is scarce. An early result on achievable information rates was been reported in~\cite{Eckford_08_arxiv}, which provided an \emph{upper bound} in terms of mutual information. Other notable recent efforts in this direction include \cite{rose13}, which
gave lower bounds by exploiting the symmetry of possible input vectors; and \cite{nak12}, which considered capacity in a simplified discrete-time setting. 
Thus, to better understand molecular communication, in this paper, we investigate asymptotic behaviour of the capacity of molecular communication with respect to the number of time intervals and/or the number of molecules. Related work was conducted in \cite{noel13}, which used dimensional analysis
to permit arbitrary scaling of their model.


The rest of this paper is organized as follows. Section~\ref{sec:model} describes the system model under consideration. Sections~\ref{sec:Single} shows scaling results in a single coordinate, i.e., scaling time with constant molecules and scaling molecules with constant time. Scaling in both time and molecules is shown in Section~\ref{sec:both}. 

\section{Model and notation}
\label{sec:model}

First, a brief word on notation: vectors will be represented with superscripts, e.g., $x$ is a scalar,
while $x^t = [x_1,x_2,\ldots,x_t]$ is a vector. It will be clear from context whether a superscript
represents a vector or a scalar exponent. Generally, random variables will be represented
by capital letters (e.g., $Y$), and particular values of those random variables by lower case letters
(e.g., $y$).

\subsection{Molecular communication model}

We use the standard assumptions for information-theoretic analysis of molecular communication 
\cite{nakano-book}:
\begin{enumerate}
	\item The transmitter is a point source of molecules at the origin, and is the only source of the molecule species of interest;
	\item The receiver is a surface surrounding a connected region of points $\mathcal{P}$, which does not include the origin;
	\item Motions of different molecules are independent and identically distributed (i.i.d.), and molecules do not change species or disappear while propagating;
	\item There is no interaction between the transmitter and any molecule after release; and
	\item The medium is infinite in every direction, with no barrier or obstacle except $\mathcal{P}$.
\end{enumerate}
Some of these assumptions may be physically unrealistic: for example, in signal transduction, the transmitter is a cell, which is not well modelled as a point source. However, these assumptions lend themselves to tractable analysis.

To further simplify our analysis, we restrict ourselves to discrete time: the communication session lasts $t$ time instants, indexed $\{1,2,\ldots,t\}$. Meanwhile, the transmitter has $m > 0$ molecules available. It is important to note that the molecules are {\em indistinguishable from each other}.

The transmitter forms the vector $X^t = [X_1,X_2,\ldots,X_n]$, where $X_i$ represents the number of molecules released at discrete time instant $i$.
The receiver forms the vector $Y^t = [Y_1,Y_2,\ldots,Y_t]$, where $Y_i$ is the number
of molecules that arrive at time $t$, obtained as follows. For a molecule released at time $i$, its first arrival time at the receiver is $i + n$, where 
$n$ is the outcome of 
a random variable with distribution $p_N(n)$, the first arrival time distribution of the Brownian motion. 
Thus, $Y_j$ is the number of molecules such that $i+n=j$, for each possible release time $i$.

Recalling that we restrict ourselves to discrete time, $N$ is supported on $\{0,1,2,\ldots\}$.
We further assume,
as in \cite{eck07}, that molecules are absorbed on arrival at the receiver; this can
be shown to be an information-theoretically ideal assumption \cite{nakano-book}. Thus,
$p_N(n)$ is the only property of Brownian motion we require. 

Finally, we require the following conditions on $p_N(n)$ 
to prove our results:
\begin{itemize}
	\item $p_N(n) = 0$ for all $n < 0$, i.e., the system is causal.
	\item Let $F_N(n) = \sum_{i=0}^n p_N(i)$ 
	represent the cdf of the first arrival time distribution; then
there must exist constants $c > 0$ and $n_0 < \infty$ such that $F_N(n_0) \geq c$.
\end{itemize}
Aside from these, we will put no other conditions on the first arrival time distribution $p_N(n)$,
so that our results can apply as widely as possible.

Since we are interested in scaling with increasing $t$ and $m$, we do {\em not} calculate
information rates in this paper; instead, we deal directly with mutual information $I(X^t;Y^t)$.
Reflecting this, we use the notation $C(t)$ or $C(m)$ to indicate capacity as a function of
either time or molecules, respectively. In either case, capacity is found by maximizing
over the input distribution $p_{X^t}(x^t)$.

\subsection{Scaling notation}

Throughout this paper we use Bachmann-Landau scaling notation.
For nonnegative functions $f(n)$ and $g(n)$:
\begin{equation}
	f(n) = \Omega(g(n)), \nonumber
\end{equation}
signifies that there exist positive constants $a$ and $n^\prime$ such that
$a g(n) \leq f(n)$
for all $n \geq n^\prime$ (i.e., $f(n)$ upper bounds $g(n)$); 
\begin{equation}
	f(n) = O(g(n)), \nonumber
\end{equation}
signifies that there exist positive constants $b$ and $n^\prime$ such that
$f(n) \leq b g(n)$
for all $n \geq n^\prime$ (i.e., $g(n)$ upper bounds $f(n)$); and
\begin{equation}
	f(n) = \Theta(g(n)), \nonumber
\end{equation}
signifies that $f(n) = \Omega(g(n))$ and $f(n) = O(g(n))$
(i.e., 
$g(n)$ is of the same order as $f(n)$).

\section{Scaling in a single coordinate}
\label{sec:Single}

\subsection{Overview of main results in this section}

In this section, we consider the scaling of capacity as a function of time,
where number of molecules is held constant, and vice versa. In both cases,
we show that the asymptotic
scaling is logarithmic in the other coordinate: 
in Theorem \ref{thm:tTheta}, we show that
$\Theta(\log t)$ for constant $m$,
and in Theorem \ref{thm:mTheta}, we show that
$\Omega(\log m)$ for constant $t$.

Our approach is to find an upper bound for 
capacity using a maximum-entropy argument,
and a lower bound for capacity using an example communication system.
The results follow by observing that 
the upper and lower bounds have the same asymptotic order.



%

\subsection{Scaling time with constant molecules}

Assume that the number of molecules $m$ is fixed, and evaluate the capacity as the number of time intervals $t$ increases.

\begin{lemma}
	\label{lem:tOmega}
	For fixed $m$,
	\begin{equation}
		C(t) = \Omega(\log t) .
	\end{equation}
\end{lemma}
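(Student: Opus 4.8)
The plan is to prove the bound by exhibiting one (far from optimal) input distribution together with a decoder and showing that its mutual information already grows like $\log t$; since $C(t)$ is the maximum of $I(X^t;Y^t)$ over input distributions, this is enough. Since $m \ge 1$, I would use a single molecule and turn the channel into a timing channel. Partition the $t$ time slots into $K = \lfloor t/(n_0+1)\rfloor$ consecutive blocks of length $n_0+1$, so that $K = \Theta(t)$. Let the transmitter draw a block index $X$ uniformly from $\{1,\dots,K\}$ and release its one molecule in the first slot of block $X$ (and release nothing in all other slots); identify this $X^t$ with $X$. Let the receiver form $\hat{X}(Y^t)$ equal to the index of the block containing the first arrival, or $\hat{X} = 1$ if no molecule has arrived by time $t$.

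The one probabilistic ingredient I need is that a molecule released in the first slot of block $X$ arrives at most $n_0$ slots later -- hence somewhere inside block $X$, since that block has length $n_0+1$ -- with probability at least $c$, uniformly in $X$; this is immediate from the hypothesis $F_N(n_0)\ge c$. (The choice of block length $n_0+1$ is exactly so that the event $N\le n_0$ forces $\hat{X}=X$, and so that every block fits inside $\{1,\dots,t\}$.) Consequently $\Pr[\hat{X}\ne X] \le 1-c$ no matter what value $X$ takes.

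I would then convert this into an entropy inequality in the manner of Fano. Let $W=\mathbf{1}[\hat{X}\ne X]$. Because $\hat{X}$ is a deterministic function of $Y^t$, $H(X\mid Y^t)\le H(X\mid\hat{X})\le H(W)+\Pr[W=1]\log K\le 1+(1-c)\log K$. Since $X$ is uniform on a set of size $K$, $H(X)=\log K$, and therefore $I(X^t;Y^t)=H(X)-H(X\mid Y^t)\ge c\log K-1$. As $K=\Theta(t)$ we have $\log K=\log t-O(1)$, so $C(t)\ge I(X^t;Y^t)=\Omega(\log t)$, which is the claim.

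The step I expect to be the genuine obstacle -- or at least the one that forces the Fano-style detour -- is that $p_N(n)$ is allowed to have unbounded support and arbitrarily heavy tails, so the molecule can land far outside its block or even fail to arrive by time $t$; there is no way to reduce to a noiseless timing channel. The assumptions give only a fixed positive lower bound $c$ on the ``on-time'' probability, and the Fano-type bound is precisely the device that lets a constant success probability still deliver $\Theta(\log K)$ bits. The bookkeeping -- choosing $K$ so that $K=\Theta(t)$ and aligning the block boundaries with the decoding rule -- is routine and would not be belabored.
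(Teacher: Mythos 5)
Your proof is correct and follows essentially the same route as the paper: pulse-position modulation of a single release over time blocks, decoding by the block containing the (first) arrival, and a Fano-type bound with the success probability bounded below through $F_N$. The only real difference is the block length: you use constant blocks of length $n_0+1$ (so the error probability is $\le 1-c$ directly and $K=\Theta(t)$), while the paper uses blocks of length $\lfloor\sqrt{t}\rfloor$ (so $\ell\approx\sqrt{t}$ and the error probability is $1-F_N(\lfloor\sqrt{t}\rfloor)$); both choices give $\Omega(\log t)$.
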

\begin{proof}
	The proof is found in Appendix \ref{apx:tOmega}.
\end{proof}

\begin{lemma}
	\label{lem:tO}
	For fixed $m$,
	\begin{equation}
		C(t) = O(\log t) .
	\end{equation}
\end{lemma}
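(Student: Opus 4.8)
Proof proposal for Lemma \ref{lem:tO} ($C(t) = O(\log t)$ for fixed $m$):

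The plan is to upper-bound the mutual information $I(X^t; Y^t)$ by the entropy of the output $H(Y^t)$, and then bound $H(Y^t)$ by a maximum-entropy argument over the set of possible output vectors. First I would observe that $I(X^t;Y^t) \leq H(Y^t)$, so it suffices to control $H(Y^t)$ uniformly over all input distributions. The key structural fact is that, since there are only $m$ indistinguishable molecules in total, each $Y_j \in \{0,1,\ldots,m\}$ and moreover $\sum_{j=1}^t Y_j \leq m$ (not every released molecule need arrive within the $t$ time instants, but no more than $m$ can). Thus $Y^t$ ranges over a restricted set: the number of distinct realizable output vectors is at most the number of ways to place at most $m$ indistinguishable arrivals into $t$ time bins, which is $\binom{t+m}{m}$ (stars and bars, with one extra bin for "never arrives / arrives after $t$").

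The second step is purely combinatorial: a uniform distribution over a set of size $K$ has entropy $\log K$, and this is the maximum entropy of any distribution supported on that set. Hence
\begin{equation}
	H(Y^t) \leq \log \binom{t+m}{m} .
\end{equation}
For fixed $m$, $\binom{t+m}{m} = \frac{(t+m)(t+m-1)\cdots(t+1)}{m!} \leq \frac{(t+m)^m}{m!}$, so $\log \binom{t+m}{m} \leq m \log(t+m) - \log(m!)$. For $t$ sufficiently large this is at most, say, $2m \log t$, which is $O(\log t)$ since $m$ is a constant. Combining with $C(t) = \max_{p_{X^t}} I(X^t;Y^t) \leq H(Y^t)$ gives the claim, with the hidden constant depending only on $m$.

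I do not anticipate a serious obstacle here; this direction is the "easy" bound. The one point requiring a little care is justifying the counting of realizable output vectors — specifically, that the causality and absorption assumptions do not create additional support (they only shrink it, which is fine for an upper bound), and that bounding by $\binom{t+m}{m}$ rather than the exact support size is harmless since we only need an $O(\cdot)$ statement. Everything else is elementary: the bound $I \leq H(Y^t)$, the max-entropy-of-finite-support inequality, and the asymptotics of a binomial coefficient with one parameter held fixed.
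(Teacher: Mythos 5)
Your proposal is correct, and it reaches the bound by a somewhat different route than the paper. The paper works on the \emph{input} side: it writes $I(X^t;Y^t) \leq H(X^t)$ and then bounds the entropy of the input by (in effect) describing it through the release times of the $m$ molecules, each ranging over the $t$ time slots, giving $I(X^t;Y^t) \leq m \log t$ (as written, the intermediate step decomposes $H(X^t) \leq \sum_i H(X_i)$, but the conclusion $m\log t$ makes clear the intended per-molecule accounting). You instead work on the \emph{output} side: $I(X^t;Y^t) \leq H(Y^t)$, then count the support of $Y^t$ by stars and bars, since at most $m$ indistinguishable arrivals fall into $t$ bins (plus one overflow bin), giving $H(Y^t) \leq \log \binom{t+m}{m} \leq m\log(t+m) - \log(m!) = O(\log t)$ for fixed $m$. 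Your counting argument is essentially the same device the paper uses later in Lemma \ref{lem:mtO} (equations (\ref{eqn:bins}) and (\ref{eqn:BinomialEntropyBound})), here applied to $Y^t$ rather than $X^t$; it has the advantage of being airtight without any per-coordinate decomposition, and by the symmetry $\binom{t+m}{m} = \binom{t+m}{t}$ the same computation immediately yields Lemma \ref{lem:mO} as well (fixed $t$, growing $m$), whereas the paper proves that lemma by a separate output-entropy decomposition. What the paper's input-side argument buys in exchange is a slightly more direct constant ($m \log t$ versus $m\log(t+m)$), though for an $O(\cdot)$ statement the difference is immaterial. No gaps: your support count is valid because absorption guarantees each molecule is counted at most once, so $\sum_j Y_j \leq m$, and restricting attention to realizable outputs only shrinks the support, which is harmless for an upper bound.
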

\begin{proof}
	Write mutual information as
	\begin{align}
		I(X^t;Y^t) &= H(X^t) - H(X^t \given Y^t) \\
		&\leq H(X^t) \\
		\label{eqn:lem:tO1}
		& \leq \sum_{i=1}^t H(X_i)
	\end{align}
	where (\ref{eqn:lem:tO1}) follows from the chain rule of entropy and the properties
	of conditional entropy. Moreover,
	\begin{equation}
		H(X_i) \leq \log t . \nonumber
	\end{equation}
	Substituting back into (\ref{eqn:lem:tO1}), we have
	\begin{equation}
		I(X^t;Y^t) \leq m \log t. \nonumber 
	\end{equation}
	Since $m$ is constant (by assumption), and since $C(t) = \max_{p_{X^t}(x^t)} I(X^t;Y^t)$, 
	the lemma follows.
\end{proof}

\begin{theorem}
	\label{thm:tTheta}
	For fixed $m$,
	\begin{equation}
		C(t) = \Theta(\log t). \nonumber
	\end{equation}
\end{theorem}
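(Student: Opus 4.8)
The plan is straightforward: Theorem \ref{thm:tTheta} follows immediately by combining the two lemmas just proved. Lemma \ref{lem:tOmega} gives $C(t) = \Omega(\log t)$ and Lemma \ref{lem:tO} gives $C(t) = O(\log t)$, so by the definition of $\Theta$ stated in the scaling-notation subsection, $C(t) = \Theta(\log t)$. The proof is therefore a one-line invocation of the definition; there is no real obstacle here.

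If I were instead proving this from scratch (i.e., not allowed to cite the lemmas), the work splits cleanly into the two bounds. For the upper bound $O(\log t)$, I would reproduce the argument already given in Lemma \ref{lem:tO}: write $I(X^t;Y^t) = H(X^t) - H(X^t \given Y^t) \le H(X^t) \le \sum_{i=1}^t H(X_i)$ by the chain rule and subadditivity of entropy, then bound each $H(X_i) \le \log t$ because $X_i$ takes at most... actually at most $m+1$ values, but more crudely the total number of molecules released over $t$ slots is $m$, so a looser bound $\sum_i H(X_i) \le m \log(t+1)$ suffices; since $m$ is a constant this is $O(\log t)$. For the lower bound $\Omega(\log t)$, the natural construction is a timing-based code: release a single burst of all $m$ molecules at one of roughly $t/n_0$ well-separated candidate times, so that with probability at least $c$ (using the condition $F_N(n_0) \ge c$) the receiver observes at least one arrival within the corresponding window and can decode the chosen time index. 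This yields a channel with $\Theta(t)$ inputs and bounded error, hence mutual information $\Omega(\log t)$; maximizing over input distributions only helps, so $C(t) = \Omega(\log t)$.

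The only subtlety in the from-scratch version is the lower-bound construction: one must check that the $\Theta(t)$ candidate release times can be chosen so their arrival windows are distinguishable, and that a single surviving arrival in the correct window (vs. possible stragglers or losses) still permits reliable decoding with a positive constant gap from the error-free case — this is precisely where the causality assumption $p_N(n)=0$ for $n<0$ and the tail condition $F_N(n_0)\ge c$ are used. But since Lemmas \ref{lem:tOmega} and \ref{lem:tO} are already established in the excerpt, the actual proof of the theorem is immediate.

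\begin{proof}
	Combining Lemma \ref{lem:tOmega} and Lemma \ref{lem:tO}, we have both $C(t) = \Omega(\log t)$ and $C(t) = O(\log t)$. By the definition of $\Theta(\cdot)$, the result follows.
\end{proof}
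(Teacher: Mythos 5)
Your proof is correct and matches the paper's own argument exactly: the theorem is a one-line consequence of Lemma \ref{lem:tOmega}, Lemma \ref{lem:tO}, and the definition of $\Theta(\cdot)$. The additional from-scratch sketch is not needed for the theorem itself and is consistent in spirit with how the paper proves the underlying lemmas.
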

\begin{proof}
	The theorem follows directly from \emph{Lemmas} \ref{lem:tOmega}-\ref{lem:tO},
	and the definition of $\Theta(\log t)$.
\end{proof}

\subsection{Scaling molecules with constant time}

In this section, we assume that the number of time intervals~$t$ is fixed, and evaluate the mutual information as the number of molecules $m$ increases.

\begin{lemma}
	\label{lem:mOmega}
	For fixed $t$, 
	%
	%
	\begin{equation}
		C(m) = \Omega(\log m) .
	\end{equation}
\end{lemma}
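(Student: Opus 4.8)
The plan is to exhibit an explicit communication scheme that achieves $\Omega(\log m)$ mutual information with $t$ held fixed, mirroring the structure of Lemma~\ref{lem:tOmega}. The natural idea is to use only a single time slot (say slot $1$) and encode information purely in the \emph{number} of molecules released, then argue that the receiver can recover enough of that count from the number of arrivals to carry $\Omega(\log m)$ bits. Concretely, I would let $X_1$ take values in a set of size roughly $\sqrt{m}$ or $m^{\alpha}$ for some $\alpha \in (0,1)$, with the remaining slots empty, so that $H(X^t) = H(X_1) = \Theta(\log m)$; the job is then to show $H(X^t \mid Y^t)$ stays bounded (or at least $o(\log m)$), so that $I(X^t;Y^t) = H(X_1) - H(X_1\mid Y^t)$ is still $\Omega(\log m)$.

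The key steps, in order, would be: (i) fix $t$ and observe that by the causality and the $F_N(n_0)\ge c$ assumptions, a molecule released in slot $1$ arrives within the session (slots $1$ through $t$) with some probability that is bounded below — here I would want $t \ge n_0+1$, which is harmless since $t$ is a fixed constant and we only care about large $m$; (ii) define the input alphabet for $X_1$, e.g. the multiples of some spacing $\delta$ up to $m$, giving an alphabet of size $\Theta(m/\delta)$ and hence $H(X_1) = \log(\Theta(m/\delta))$; (iii) bound $H(X^t\mid Y^t)$ by $H(X_1 \mid Y_{\text{tot}})$ where $Y_{\text{tot}} = \sum_{j=1}^t Y_j$ is the total number of molecules absorbed during the session — this is a sufficient statistic because all molecules were released in slot $1$, so arrival-order ambiguity is irrelevant; (iv) since $Y_{\text{tot}}$ is $\mathrm{Binomial}(X_1, q)$ for $q = F_N(t-1) \ge c > 0$, use a concentration/Fano-type argument: $Y_{\text{tot}}$ concentrates around $qX_1$ with standard deviation $\Theta(\sqrt{m})$, so choosing the spacing $\delta$ to grow like $\sqrt{m}\log m$ (or any $\omega(\sqrt m)$) makes neighbouring codewords distinguishable with high probability, forcing $H(X_1\mid Y^t) = O(1)$; (v) conclude $I(X^t;Y^t) = \log(\Theta(m/(\sqrt m\log m))) - O(1) = \Omega(\log m)$, and since capacity maximizes over input distributions the lemma follows.

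I expect the main obstacle to be step (iv): controlling $H(X_1 \mid Y^t)$ cleanly without a full large-deviations computation. The subtlety is that the conditional entropy is an expectation over the true $X_1$ of the entropy of the posterior, and one must verify the posterior is concentrated uniformly well across all alphabet symbols, including the small ones where the binomial variance is also small (which actually helps) and the large ones where it is $\Theta(\sqrt m)$ (which is the binding case). A Chebyshev or Chernoff bound on $\Pr[|Y_{\text{tot}} - qX_1| > \delta/2]$ combined with the trivial bound $H(X_1\mid Y^t, \text{error}) \le \log(\text{alphabet size}) = O(\log m)$ should suffice: the error event has probability $O(1/\log^2 m)$ (Chebyshev) or even exponentially small (Chernoff), so its contribution to $H(X_1\mid Y^t)$ is $o(\log m)$, and the no-error contribution is $O(1)$. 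A secondary, more cosmetic obstacle is handling the boundary (alphabet symbols near $0$, or the edge cases when $t < n_0$), but these are absorbed into the constants since $t$ is fixed.
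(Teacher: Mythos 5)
Your proposal follows essentially the same route as the paper's own proof: release all molecules in the first slot, use an amplitude alphabet with spacing on the order of $\sqrt{m}$ up to $m$, decode from the total arrival count $\sum_i Y_i$, and combine a Chebyshev concentration bound with Fano's inequality to keep the conditional entropy small, yielding roughly $\tfrac{1}{2}\log m$ bits. The only difference is cosmetic — you take spacing $\omega(\sqrt{m})$ so the error probability vanishes, whereas the paper uses spacing $2k\sqrt{mpq}$ with a fixed constant $k$ and tolerates a constant error probability $1/k^2$ — and both versions give the claimed $\Omega(\log m)$.
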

\begin{proof}
	The proof is found in Appendix \ref{apx:mOmega}.
\end{proof}

\begin{lemma}
	\label{lem:mO}
	For fixed $t$,
	\begin{equation}
		C(m) = O(\log m) .
	\end{equation}
\end{lemma}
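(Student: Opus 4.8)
\textbf{Proof proposal for Lemma \ref{lem:mO}.}

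The plan is to mirror the structure of the proof of Lemma \ref{lem:tO}, bounding $I(X^t;Y^t)$ from above by the entropy of the channel output and then exploiting the fact that, with $m$ molecules and $t$ fixed time slots, the output vector $Y^t$ can only take polynomially-many (in $m$) distinct values. First I would write
\begin{align}
	I(X^t;Y^t) &= H(Y^t) - H(Y^t \given X^t) \\
	&\leq H(Y^t),
\end{align}
so that the whole problem reduces to upper-bounding $H(Y^t)$ uniformly over the input distribution $p_{X^t}(x^t)$.

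Next I would count the support of $Y^t$. Since every molecule that is ever released eventually contributes at most one arrival (molecules are absorbed on arrival and do not disappear while propagating), and there are only $m$ molecules in total, the coordinates of $Y^t = [Y_1,\ldots,Y_t]$ are nonnegative integers satisfying $\sum_{j=1}^t Y_j \leq m$. The number of such vectors is $\binom{m+t}{t}$, which for fixed $t$ is a polynomial in $m$ of degree $t$; in particular there is a constant $a_t$ (depending only on $t$) with $\binom{m+t}{t} \leq a_t (m+1)^t$. A uniform distribution maximizes entropy over a finite support, so
\begin{equation}
	H(Y^t) \leq \log \binom{m+t}{t} \leq \log\!\left(a_t (m+1)^t\right) = \log a_t + t\log(m+1). \nonumber
\end{equation}
Since $t$ is constant by assumption, the right-hand side is $O(\log m)$, and because $C(m) = \max_{p_{X^t}(x^t)} I(X^t;Y^t)$ and the bound holds for every input distribution, the lemma follows.

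I do not anticipate a serious obstacle here; the argument is essentially a counting bound combined with the maximum-entropy principle, exactly parallel to Lemma \ref{lem:tO} but applied to the output rather than the input (which is the natural choice since it is $m$, not $t$, that is growing). The only point requiring a little care is justifying that $\sum_j Y_j \leq m$ — i.e., that no molecule is counted more than once and that molecules released near the end of the session but arriving after time $t$ only make the sum smaller, never larger — but this is immediate from the absorbing-receiver assumption and the causality of $p_N(n)$ stated in the model. An equally valid alternative would be to bound $H(Y^t) \leq \sum_{i=1}^t H(Y_i)$ with $H(Y_i) \leq \log(m+1)$, giving $I(X^t;Y^t) \leq t\log(m+1)$ directly; I would likely present this shorter version for symmetry with the proof of Lemma \ref{lem:tO}.
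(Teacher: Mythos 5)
Your proposal is correct and is essentially the paper's own argument: the paper bounds $I(X^t;Y^t) \leq H(Y^t) \leq \sum_{i=1}^t H(Y_i) \leq t\log m$ for fixed $t$, which is precisely the ``shorter alternative'' you give at the end, and your main variant via $H(Y^t) \leq \log\binom{m+t}{t}$ is just a marginally tighter instance of the same output-support, maximum-entropy counting. (Your $\log(m+1)$ is a slightly more careful per-coordinate constant than the paper's $\log m$, but this is immaterial to the $O(\log m)$ conclusion.)
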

\begin{proof}
	Note that
	\begin{align}
		\label{eqn:lem:mO1pre}
		I(X^t;Y^t) &\leq H(Y^t) \\
		\label{eqn:lem:mO1}
		&\leq \sum_{i=1}^t H(Y_i) ,
	\end{align}
	where (\ref{eqn:lem:mO1}) follows from the chain rule of entropy and the properties
	of conditional entropy. 
	Further, since there are only $m$ molecules in total,
	\begin{equation}
		\label{eqn:lem:mO2}
		H(Y_i) \leq \log m ,
	\end{equation}
	The remainder follows the proof of Lemma \ref{lem:tO}, exchanging $m$ for $t$.
\end{proof}

\begin{theorem}
	\label{thm:mTheta}
	For fixed $t$,
	\begin{equation}
		C(m) = \Theta(\log m). \nonumber
	\end{equation}
\end{theorem}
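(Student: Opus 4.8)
The plan is to mimic exactly the structure already used for Theorem~\ref{thm:tTheta}: obtain matching upper and lower bounds of order $\log m$ and invoke the definition of $\Theta$. The lower bound $C(m) = \Omega(\log m)$ is Lemma~\ref{lem:mOmega}, whose proof is deferred to Appendix~\ref{apx:mOmega} and which I am entitled to assume. The upper bound $C(m) = O(\log m)$ is Lemma~\ref{lem:mO}. Since both lemmas are in hand, the theorem is immediate.

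Concretely, the steps I would carry out are: (i) cite Lemma~\ref{lem:mOmega} to get $C(m) = \Omega(\log m)$; (ii) cite Lemma~\ref{lem:mO} to get $C(m) = O(\log m)$; (iii) recall that $f(n) = \Theta(g(n))$ means precisely $f(n) = \Omega(g(n))$ \emph{and} $f(n) = O(g(n))$, as defined in Section~\ref{sec:Single}; (iv) conclude $C(m) = \Theta(\log m)$ for fixed $t$. This is a single-line argument with no computation.

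There is essentially no obstacle in the theorem itself; all the work lives in the two supporting lemmas. If I had to identify where the real effort would go, it is in the deferred proof of Lemma~\ref{lem:mOmega} in Appendix~\ref{apx:mOmega}: one must exhibit an explicit coding scheme (an ``example communication system'') that extracts $\Omega(\log m)$ bits from a fixed number $t$ of time slots as $m$ grows. A natural candidate is to use only the first slot, release one of roughly $m$ possible molecule counts, and observe that by the condition $F_N(n_0) \ge c$ a constant fraction of released molecules arrive within the window; a counting/typicality argument on the arrival count $Y_j$ then yields an entropy of order $\log m$. But since that lemma is granted by the excerpt, the proof of Theorem~\ref{thm:mTheta} I would write is just the two-citation wrap-up above, phrased in parallel with the proof of Theorem~\ref{thm:tTheta}:

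\begin{proof}
	The theorem follows directly from \emph{Lemmas} \ref{lem:mOmega}--\ref{lem:mO},
	and the definition of $\Theta(\log m)$.
\end{proof}
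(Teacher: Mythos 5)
Your proposal is correct and is essentially identical to the paper's own proof: both simply combine Lemma~\ref{lem:mOmega} (the $\Omega(\log m)$ lower bound) with Lemma~\ref{lem:mO} (the $O(\log m)$ upper bound) and apply the definition of $\Theta(\cdot)$. Nothing further is needed.
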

\begin{proof}
	The theorem follows directly from \emph{Lemmas} \ref{lem:mOmega}-\ref{lem:mO},
	and the definition of $\Theta(\log m)$.
\end{proof}

\section{Scaling in both time and molecules}
\label{sec:both}

The news from Section \ref{sec:Single} is grim: a simplistic reading of these
results would be that capacity scales logarithmically in both $t$ and $m$. 
However, if $m$ is proportional to $t$, the story changes.
In this section we restrict ourselves to the natural case where the number of molecules
$m$ is upper bounded by $\alpha t$, for some constant $\alpha$.
Our main result is to show that $I(X^t;Y^t) = \Theta(m) = \Theta(t)$.
As many authors have pointed out that molecules $m$ are proportional to energy,
then if $m$ is proportional to $t$, this could mean a power constraint.

Our approach in this section is similar to that in Section \ref{sec:Single}: we give a 
maximum entropy result as the upper bound, and a practical system as the lower bound.


For $0 \leq \lambda \leq 1$, let $\mathcal{H}(\lambda)$ represent the binary
entropy function:
\begin{equation}
	\mathcal{H}(\lambda) = \lambda \log \frac{1}{\lambda} + (1-\lambda) \log \frac{1}{1-\lambda}. \nonumber
\end{equation}
We make use of the well-known result that
\begin{equation}
	\label{eqn:BinomialEntropyBound}
	\log {n \choose k} \leq n \mathcal{H} \left( \frac{k}{n} \right), 
\end{equation}	
and the property that, given $n$ indistinct objects and $k$ distinct bins, 
the number of ways to assign objects to bins is 
\begin{equation}
	\label{eqn:bins}
	{n+k-1 \choose k-1 } .
\end{equation}

\begin{lemma}
	\label{lem:mtOmega}
	For some constant $\alpha > 0$, suppose $m \leq \alpha t$. 
	Then $C(t) = \Omega(t)$ and $C(m) = \Omega(m)$.
\end{lemma}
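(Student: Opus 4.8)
The plan is to prove this lower bound in the spirit of Lemma~\ref{lem:tOmega}: exhibit an explicit family of input distributions (an ``example system'') for which $I(X^t;Y^t)$ grows linearly, and then observe that, since the number of molecules scales linearly with $t$, this linear growth is simultaneously $\Omega(t)$ and $\Omega(m)$. Concretely, I would partition the $t$ time instants into $G=\lfloor t/L\rfloor$ consecutive blocks of \emph{constant} length $L\geq n_0+1$. Let $G'=\min(G,\lfloor m/b\rfloor)$ for a constant batch size $b$; in the first slot of block $g$ (for $g\leq G'$) the transmitter releases a batch of $b$ molecules if $B_g=1$ and nothing if $B_g=0$, where $B_1,\ldots,B_{G'}$ are i.i.d.\ uniform on $\{0,1\}$, and all remaining slots are left empty. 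This releases at most $m$ molecules by construction. Since $X^t$ is an injective deterministic function of $(B_1,\ldots,B_{G'})$, we have $I(X^t;Y^t)=I(B^{G'};Y^t)$, and by independence of the $B_g$ together with a standard per-coordinate Fano argument it suffices to exhibit, for each $g$, an estimator $\hat B_g(Y^t)$ with $\Pr[\hat B_g\neq B_g]\leq p$ for some constant $p<1/2$: this yields $I(X^t;Y^t)\geq \sum_{g=1}^{G'}\bigl(1-\mathcal{H}(p)\bigr)=(1-\mathcal{H}(p))\,G'$.

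For the estimator I would take $\hat B_g=1$ iff at least one molecule arrives among the $L$ slots of block $g$. A false negative ($B_g=1$ yet no arrival lands in the block) forces all $b$ batch molecules to take more than $n_0\leq L-1$ slots, so its probability is at most $(1-F_N(n_0))^b\leq(1-c)^b$, using the causality/$F_N(n_0)\geq c$ hypothesis on $p_N$. A false positive ($B_g=0$ yet some molecule from an earlier block lands in block $g$) has probability, by a Markov bound, at most the expected number of such stray arrivals; writing this expectation out and using only that the intervals $\{[lL,\,lL+L-1]\}_{l\geq1}$ tile $[L,\infty)$ together with $\sum_n p_N(n)\leq1$, one gets the clean bound $b\sum_{n\geq L}p_N(n)$, uniform in $g$ and $t$.

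The main obstacle is the tension between these two error events: reliable detection of an ``on'' block wants $b$ large, while suppressing inter-symbol interference (arbitrarily late arrivals, about whose tail we assume nothing) wants $b$ small. The resolution is to set the two constants in sequence. First choose the batch size $b$ (depending only on $c$) so that $(1-c)^b<1/4$; then, because $p_N$ is summable, $\sum_{n\geq L}p_N(n)\to0$ as $L\to\infty$, so choose the block length $L$ (depending on $b$ and on the tail of $p_N$, but still a constant) large enough that $b\sum_{n\geq L}p_N(n)<1/4$. With these choices $\Pr[\hat B_g\neq B_g]\leq1/4<1/2$, hence $\mathcal{H}(1/4)<1$, and the estimate above gives $I(X^t;Y^t)\geq(1-\mathcal{H}(1/4))\min(\lfloor t/L\rfloor,\lfloor m/b\rfloor)$. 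Since $L$ and $b$ are constants and $m$ is proportional to $t$, both $\lfloor t/L\rfloor$ and $\lfloor m/b\rfloor$ are $\Theta(t)=\Theta(m)$, so $C(t)=\Omega(t)$ and $C(m)=\Omega(m)$, as claimed.
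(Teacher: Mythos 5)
Your proof is correct, but it takes a genuinely different route from the paper. The paper's proof releases a single molecule independently in each slot with probability $r$, binarizes the output ($W_i = 1$ iff $Y_i \geq 1$), and then lower-bounds $I(X^t;Y^t)$ by $\sum_i I(W_i;X_i) \geq t I_0$ via the auxiliary-channel (mismatched memoryless) bound of \cite{arn06}, where positivity of $I_0$ is asserted under the extra condition $p_N(0) > 0$; the constraint $m \leq \alpha t$ is then handled somewhat informally ($\alpha \geq 1$ trivially, $\alpha < 1$ by thinning the used slots). Your argument instead uses block on--off keying with constant block length $L$ and constant batch size $b$, an explicit per-block threshold detector, and a per-coordinate Fano bound $I(X^t;Y^t) \geq (1-\mathcal{H}(1/4))\,G'$ with $G' = \min(\lfloor t/L\rfloor, \lfloor m/b\rfloor)$. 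What this buys: it is self-contained (only Fano plus union/Markov bounds), it uses only the assumptions actually stated in the model ($F_N(n_0) \geq c$, causality, summability of $p_N$) rather than $p_N(0)>0$, it handles inter-symbol interference explicitly through the tail sum $b\sum_{n \geq L} p_N(n)$, and it tracks the molecule budget exactly, giving the slightly sharper statement $C = \Omega(\min(t,m))$. What the paper's route buys is brevity and a bound that holds for every slot without designing a decoder, at the cost of invoking the auxiliary-channel result and the additional same-slot-arrival condition. Note that in your final step, as in the paper's, the conversion of the bound into both $\Omega(t)$ and $\Omega(m)$ uses the proportional regime $m = \Theta(t)$ assumed in Section~\ref{sec:both} (under the literal hypothesis $m \leq \alpha t$ alone, your bound still yields $C(m) = \Omega(m)$, while $C(t) = \Omega(t)$ additionally needs $m = \Omega(t)$); this is a looseness of the lemma statement itself, not of your argument.
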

\begin{proof}
	The proof is found in Appendix \ref{apx:mtOmega}.
\end{proof}

\begin{lemma}
	\label{lem:mtO}
	For some constant $\alpha > 0$, suppose $m \leq \alpha t$. Then
	$C(t) = O(t)$ and $C(m) = O(m)$.
\end{lemma}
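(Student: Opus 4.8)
The plan is to mimic the upper-bound arguments of Lemmas \ref{lem:tO} and \ref{lem:mO}, but this time without throwing away the extra structure that $m \leq \alpha t$ provides. Starting from $I(X^t;Y^t) \leq H(X^t)$, I would bound $H(X^t)$ by the logarithm of the size of the support of $X^t$. Since $X^t$ is a vector specifying how $m$ indistinguishable molecules are distributed across $t$ time instants, the number of possible values of $X^t$ is exactly the stars-and-bars count $\binom{m+t-1}{t-1}$ from (\ref{eqn:bins}). Hence $H(X^t) \leq \log \binom{m+t-1}{t-1}$, and therefore $C(t) \leq \log \binom{m+t-1}{t-1}$ as well, since this bound holds for every input distribution.

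The next step is to show $\log \binom{m+t-1}{t-1} = O(t)$ under the hypothesis $m \leq \alpha t$. Apply the binomial entropy bound (\ref{eqn:BinomialEntropyBound}) with $n = m+t-1$ and $k = t-1$, giving
\begin{equation}
	\log \binom{m+t-1}{t-1} \leq (m+t-1)\, \mathcal{H}\!\left( \frac{t-1}{m+t-1} \right). \nonumber
\end{equation}
Now $m + t - 1 \leq (\alpha+1) t$, and $\mathcal{H}(\cdot) \leq 1$ always (or at most $\log 2$ in whatever base; it is a bounded constant). So the right-hand side is at most $(\alpha+1) t \cdot \mathcal{H}(\cdot) = O(t)$. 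This establishes $C(t) = O(t)$. For $C(m) = O(m)$, note that $m \leq \alpha t$ does not by itself bound $t$ in terms of $m$, so I would instead start from $I(X^t;Y^t) \leq H(Y^t) \leq \sum_{i=1}^t H(Y_i) \leq t \log(m+1)$ as in Lemma \ref{lem:mO} --- wait, that gives $t \log m$, not $O(m)$. The cleaner route: the statement $C(m) = O(m)$ is understood as scaling in $m$ with $t$ tied to $m$ through the same proportionality, i.e. we are on the regime where $m$ and $t$ grow together with $m \asymp t$; then $\log\binom{m+t-1}{t-1} = O(t) = O(m)$ follows from the bound just derived, simply by reading $(\alpha+1)t \le (\alpha+1)\alpha^{-1} m \cdot (\text{const})$ when additionally $t \le \beta m$ (which holds in the proportional regime that frames this whole section), or more directly by symmetry of the binomial coefficient and applying (\ref{eqn:BinomialEntropyBound}) with the roles of the two parameters swapped.

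The main obstacle is the bookkeeping around which variable is the ``free'' one: $m \leq \alpha t$ alone makes $t$ the natural asymptotic variable, and getting $O(m)$ requires also using the proportional (two-sided) relationship between $m$ and $t$ that Section \ref{sec:both} has set up, so that $t = O(m)$ as well. Once both inequalities $m = O(t)$ and $t = O(m)$ are in hand, the single estimate $\log \binom{m+t-1}{t-1} \leq (m+t-1)\mathcal{H}(\cdot) = O(m+t) = O(t) = O(m)$ delivers both halves of the lemma at once, and combined with Lemma \ref{lem:mtOmega} this will immediately yield the $\Theta(t) = \Theta(m)$ conclusion the section is aiming for.
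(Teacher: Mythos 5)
Your proof is essentially the paper's own argument: bound $I(X^t;Y^t)\le H(X^t)$, bound $H(X^t)$ by the logarithm of the stars-and-bars count of input arrangements via (\ref{eqn:bins}), and apply (\ref{eqn:BinomialEntropyBound}) with $\mathcal{H}(\cdot)\le 1$ and $m\le\alpha t$ to conclude $O(t)$. One small bookkeeping point: since the transmitter may release fewer than $m$ molecules, the support of $X^t$ is not exactly $\binom{m+t-1}{t-1}$ but rather $\binom{m+t}{t}$ (the paper handles this by adding a $(t{+}1)$-st bin for unreleased molecules and counting $\binom{t+\alpha t}{t}$); this changes nothing asymptotically, since $\log\binom{m+t}{t}\le (m+t)\,\mathcal{H}\!\left(\frac{t}{m+t}\right)\le(1+\alpha)t$. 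On the $C(m)=O(m)$ half, your hesitation is well placed: $m\le\alpha t$ alone does not yield it (for fixed $m$ the capacity grows like $\log t$), so one genuinely needs $t=O(m)$, i.e.\ the proportional regime this section has in mind; the paper disposes of this step with the one-line remark that it follows ``by the $O(\cdot)$ notation,'' so your explicit invocation of the two-sided relationship is, if anything, more careful than the paper's own treatment.
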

\begin{proof}
	For convenience, assume
	$\alpha t$ is an integer;
	we first show that $I(X^t;Y^t) = O(t)$. 
	First, how many ways are there to arrange 
	any $m \leq \alpha t$ molecules in $t$ time slots? 
	This is equivalent to arranging exactly $\alpha t$ indistinct objects in $t+1$ 
	distinct bins: 
	for any such assignment,
	there are $m \leq \alpha t$ objects in the first $t$ bins, representing molecules
	assigned to time slots; and
	$\alpha t - m$ objects in bin $t+1$, representing molecules not sent. 
	From (\ref{eqn:bins}),
	the number of assignments $A$ is given by
	\begin{equation}
		\label{eqn:mtO0}
		A = {t + \alpha t \choose t} .
	\end{equation}
	Moreover,
	\begin{align}
		I(X^t;Y^t) &\leq H(X^t) \\
		&\leq \log A \\
		\label{eqn:mtO1}
		&\leq (t + \alpha t) \mathcal{H} \left( \frac{t}{t+\alpha t} \right) \\
		\label{eqn:mtO2}
		&\leq (1+\alpha)t ,
	\end{align}
	where (\ref{eqn:mtO1}) follows from (\ref{eqn:BinomialEntropyBound}) and (\ref{eqn:mtO0}),
	while (\ref{eqn:mtO2}) follows since $\mathcal{H}(\cdot) \leq 1$.
	Moreover, this expression upper bounds $C(t)$, 
	since it upper bounds the maximum of $I(X^t;Y^t)$.
	Finally, (\ref{eqn:mtO2}) is obviously $O(t)$. Since $m \leq \alpha t$,
	$C(m) = O(m)$ if $C(t) = O(t)$ by 
	the $O(\cdot)$ notation, and the lemma follows.
\end{proof}

\begin{theorem}
	\label{thm:mtTheta}
	For some constant $\alpha > 0$, suppose $m \leq \alpha t$.
	Then $C(t) = \Theta(t)$ and $C(m) = \Theta(m)$.
\end{theorem}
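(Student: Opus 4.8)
The plan is to obtain Theorem~\ref{thm:mtTheta} as an immediate consequence of the two bounding lemmas already established, exactly in the spirit of the proofs of Theorems~\ref{thm:tTheta} and \ref{thm:mTheta}. By Lemma~\ref{lem:mtOmega}, under the hypothesis $m \leq \alpha t$ we have $C(t) = \Omega(t)$ and $C(m) = \Omega(m)$; by Lemma~\ref{lem:mtO}, under the same hypothesis we have $C(t) = O(t)$ and $C(m) = O(m)$. Combining the matching upper and lower bounds with the definition of $\Theta(\cdot)$ gives $C(t) = \Theta(t)$ and $C(m) = \Theta(m)$ directly. So the entire proof is the one-line observation that a function which is simultaneously $\Omega(g)$ and $O(g)$ is $\Theta(g)$, applied once with $g(t) = t$ and once with $g(m) = m$.

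The only subtlety worth flagging is the same one that is implicit in Lemma~\ref{lem:mtO}: the statement must be read with $m$ and $t$ coupled through the constraint $m \leq \alpha t$, so that, e.g., ``$C(t) = \Theta(t)$'' refers to the capacity achievable when up to $\alpha t$ molecules are available in $t$ slots. There is no independent-coordinate scaling being claimed here; the coupling is what makes linear scaling possible at all, in contrast to the logarithmic single-coordinate results of Section~\ref{sec:Single}. Since both invoked lemmas are stated under precisely this hypothesis, nothing further needs to be checked — the hypotheses propagate verbatim.

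Hence I do not anticipate any real obstacle in the proof of the theorem itself: all the work has already been done in Lemmas~\ref{lem:mtOmega} and \ref{lem:mtO}, and in particular in the Appendix~\ref{apx:mtOmega} argument underlying the lower bound (which is where the genuine content lives — exhibiting a concrete coding scheme whose mutual information grows linearly). The theorem is just the bookkeeping step that packages those two bounds together.

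\begin{proof}
	Suppose $m \leq \alpha t$ for some constant $\alpha > 0$. By Lemma~\ref{lem:mtOmega}, $C(t) = \Omega(t)$ and $C(m) = \Omega(m)$. By Lemma~\ref{lem:mtO}, $C(t) = O(t)$ and $C(m) = O(m)$. Since a nonnegative function that is both $\Omega(g(n))$ and $O(g(n))$ is, by definition, $\Theta(g(n))$, it follows that $C(t) = \Theta(t)$ and $C(m) = \Theta(m)$.
\end{proof}
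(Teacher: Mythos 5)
Your proof is correct and matches the paper's own argument exactly: Theorem~\ref{thm:mtTheta} is obtained by combining Lemma~\ref{lem:mtOmega} and Lemma~\ref{lem:mtO} with the definition of $\Theta(\cdot)$ (indeed, your version fixes the paper's small typo, which cites ``the definition of $\Theta(\log m)$'' where $\Theta(t)$ and $\Theta(m)$ are meant). Nothing further is needed.
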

\begin{proof}
	The theorem follows directly from \emph{Lemmas} \ref{lem:mtOmega}-\ref{lem:mtO},
	and the definition of $\Theta(\log m)$.
\end{proof}




%
%

\appendix

\subsection{Proof of Lemma \ref{lem:tOmega}}
\label{apx:tOmega}

Divide the interval $t$ into intervals of length $\tau = \lfloor \sqrt{t} \rfloor$.
The number of such intervals $\ell$ is
\begin{align}
	\ell &= \left\lfloor t/\tau \right\rfloor \\
	& \geq \sqrt{t} -1.
\end{align}
First suppose $m=1$.  To transmit data, we select one of the $\ell$ intervals
(uniformly at random) and release our one molecule during that interval. Then
\begin{align}
	H(X^t) &= \log \ell \\
	&\geq \log \left( \sqrt{t} - 1 \right) .
\end{align}

Since $m = 1$, at most one element of $Y^t$ is equal to 1.
At the receiver, suppose $U$ is formed from $Y^t$ as follows: 
if $y_i = 1$, and $(j-1)\tau + 1 \leq i \leq j\tau$, then $U = j$;
if all $y_i = 0$, then $U = \ell+1$. Further, the receiver decides that
the molecule was transmitted at the beginning of the $U$th interval.
Note that there are $\ell+1$ possible outcomes for $U$,
and an error occurs if and only if the molecule takes longer than $\tau$
time units to arrive. Thus,
the probability of error is
\begin{equation}
	\label{eqn:tOmegaPerr}
	P_e = 1-F_N(\tau) ,
\end{equation}
where $F_{N}$ represents the CDF of the first arrival time.

Using Fano's inequality,
\begin{align}
	H(X^t \given U) 
	&\leq (1-F_N(\tau)) \log \ell + \mathcal{H}(1-F_N(\tau)) \\
	\label{eqn:tOmegaFano}
	&\leq (1-F_N(\tau)) \log \left( \sqrt{t} - 1 \right) + 1 ,
\end{align}
where (\ref{eqn:tOmegaFano}) follows from the fact that $\mathcal{H}(\cdot) \leq 1$.
Thus
\begin{align}
	\nonumber \lefteqn{I(X^t;U)} & \\ 
	&= H(X^t) - H(X^t \given U) \nonumber\\
	\label{eqn:tFinal1}
	&\geq \log \left( \sqrt{t} - 1 \right) - (1 - F_N(\tau)) \log \left( \sqrt{t} - 1\right) - 1\\
	&= F_N(\tau)\log \left( \sqrt{t} - 1\right) - 1 ,
\end{align}
By the capacity definition and the data processing inequality,
\begin{align}
	C(t) &\geq I(X^t;Y^t) \geq I(X^t;U) \\
	&\geq F_N(\tau)\log \left( \sqrt{t} - 1\right) - 1 .
\end{align}
Finally, $\log(\sqrt{t}-1) = \Omega(\log(\sqrt{t})) = \Omega(\log t)$.

Finally, we generalize to $m > 1$: suppose the transmitter releases 
{\em all} the molecules at once, and $U$ gives the time of arrival of the {\em first} arriving molecule. Then 
(\ref{eqn:tOmegaPerr}) becomes
\begin{equation}
	P_e = \left( 1-F_N(\tau) \right)^m, \nonumber
\end{equation}
and (\ref{eqn:tFinal1}) becomes
\begin{align}
	I(X^t;Y^t) &\geq \log \left( \sqrt{t} - 1 \right) - (1 - F_N(\tau))^m \log \left( \sqrt{t} - 1\right) - 1 \nonumber\\
	\nonumber
	&\geq \log \left( \sqrt{t} - 1 \right) - (1 - F_N(\tau)) \log \left( \sqrt{t} - 1\right) - 1, 
\end{align}
which follows since $1-F_N(\sqrt{t}) \leq 1$. The remainder of the derivation is identical.


%
%

\subsection{Proof of Lemma \ref{lem:mOmega}}
\label{apx:mOmega}


%
%
%

In this proof, 
suppose a communication scheme works as follows. 
Let $\mathcal{W} = \{W_1,W_2,\ldots,W_n\}$ represent the signalling alphabet,
where each $W_i$ is an integer number of molecules between 0 and $m$.
We form $X^t$ by setting $X_1 = W$ (where $W \in \mathcal{W}$),
and $X_2 = X_3 = \ldots = X_t = 0$. That is, all molecules are released in the first time instant.
At the receiver, we form $U = \sum_{i=1}^t Y_i$ from $Y^t$.

Let $p = F_N(t)$, and let $q = 1-p$.
Chebyshev's inequality can be rewritten
\begin{equation}
	\Pr\left( |U - pW| < k \sqrt{Wpq} \right) \geq 1-\frac{1}{k^2} \nonumber
\end{equation}
Since $W \leq m$,
\begin{equation}
	\label{eqn:Chebyshev2}
	\Pr\left( |U - pW| < k \sqrt{mpq} \right) \geq 1-\frac{1}{k^2}
\end{equation}
The event under the probability can be rewritten
\begin{equation}
	\label{eqn:YRange}
	pW - k \sqrt{mpq} < U < pW + k \sqrt{mpq} .
\end{equation}

For the elements $\{W_1,W_2,\ldots,W_n\}$ of the signalling alphabet, let
\begin{equation}
	\label{eqn:XSelection}
	W_j = 2jk\sqrt{mq/p}.
\end{equation}
The peak signal is $W_n = m$, so
%
	$m = 2nk\sqrt{mq/p}$
%
and
%
	$n = (1/2k)\sqrt{mp/q}$,
%
rounding in each case to the nearest integer as necessary.

Moreover, suppose the elements of $\mathcal{W}$ are uniformly distributed. Then
\begin{align}
	H(X^t) &= \log n \nonumber \\
	&= \frac{1}{2} \log m + \log \frac{1}{2k}\sqrt{\frac{p}{q}} . \nonumber
\end{align}

Let $D(U)$ represent a decoding function such that $D(U) =~j$ if 
\begin{equation}
	\label{eqn:DecodingRange0}
	p2jk\sqrt{\frac{mq}{p}} - k \sqrt{mpq} < U \leq p2jk\sqrt{\frac{mq}{p}} + k \sqrt{mpq} .
\end{equation}
After some manipulation, (\ref{eqn:DecodingRange0}) becomes
\begin{equation}
	\label{eqn:DecodingRange}
	(2j-1)k\sqrt{mpq} < U \leq (2j+1)k\sqrt{mpq} .
\end{equation}
From (\ref{eqn:Chebyshev2})-(\ref{eqn:XSelection}), the probability
of error using $D(U)$ is at most $1/k^2$. 
By Fano's inequality,
%
%
\begin{equation}
	\label{eqn:Fano}
	H(X^t \given U) \leq \frac{1}{k^2} \log (n-1) + \mathcal{H}\left(\frac{1}{k^2}\right) ,
\end{equation}
where $\mathcal{H}$ is the binary entropy function.
Since $n \geq 1$, $2n \geq n+1$,  
so we can relax the bound in (\ref{eqn:Fano}) slightly to
\begin{align}
	H(X^t \given U) &\leq \frac{1}{k^2} \log (n+1) + \mathcal{H}\left(\frac{1}{k^2}\right) \nonumber \\
	&= \frac{1}{k^2} (1 + \log n) + \mathcal{H}\left(\frac{1}{k^2}\right) \nonumber\\
	&= \frac{1}{2k^2} \log m + \frac{1}{k^2}(1+ \frac{1}{2k}\sqrt{\frac{p}{q}}) 
		+ \mathcal{H}\left(\frac{1}{k^2}\right). \nonumber
\end{align}
Finally,
\begin{align}
	C(m) &\geq I(X^t;Y^t) \geq I(X^t;U) \nonumber \\
	\nonumber
	&\geq \frac{1}{2} \log m + \log \frac{1}{2k}\sqrt{\frac{p}{q}} \\
	& - \frac{1}{2k^2} \log m - \frac{1}{k^2}(1+ \frac{1}{2k}\sqrt{\frac{p}{q}}) 
		- \mathcal{H}\left(\frac{1}{k^2}\right) \nonumber \\
	& = \frac{1}{2} \left( 1 - \frac{1}{k^2} \right) \log m + K , \nonumber
\end{align}
where $K$ is constant in $m$; this is clearly $\Omega(\log m)$.

\subsection{Proof of Lemma \ref{lem:mtOmega}}
\label{apx:mtOmega}

We will start by considering the case of $\alpha = 1$, and generalize
the result afterward. 

Consider the following communication scheme: each time instant, we
release a single molecule with probability $r$, and release no molecule with probability
$(1-r)$. Obviously, $m \leq t$.
As before, the receiver forms $Y^t$ by counting the number of arrivals at time $t$.

To simplify the proof, however,
the receiver will actually observe $W^t$, a processed version of $Y^t$:
\begin{equation}
	w_i = 
	\left\{
		\begin{array}{cl}
			1, & y_i \geq 1 \\
			0, & y_i = 0.
		\end{array}
	\right. \nonumber
\end{equation}
We now determine $\gamma_0 := \Pr(w_i = 0 \given x_i = 0)$ (the notation $:=$ signifies
assignment).
First, molecular releases are i.i.d. by assumption. Second, for each $j > 0$,
a molecule arrives at time $i$ if and only if one was released at time $i-j$,
and its propagation delay was $j$. Thus,
\begin{equation}
	\gamma_0 = \prod_{j=1}^{i-1} \Big( 1 - rp_N(j) \Big) . \nonumber
\end{equation}
For $\gamma_1 := \Pr(w_i = 0 \given x_i = 1)$ ,
\begin{equation}
	 \gamma_1 =  \Big( 1 - p_N(0) \Big)\prod_{j=1}^{i-1} \Big( 1 - rp_N(j) \Big) . \nonumber
\end{equation}
For $w,x \in \{0,1\}$, define
\begin{equation}
	g_i(w \given x) :=
	\left\{
		\begin{array}{cl}
			\gamma_x, & w = 0 \\
			1-\gamma_x, & w = 1 ,
		\end{array}
	\right. \nonumber
\end{equation}
and
\begin{equation}
	g_i(w) := r g_i(w \given 1) + (1-r) g_i(w \given 0) . \nonumber
\end{equation}
It should be clear that $g_i(w \given x) = p_{W_i | X_i}(w \given x)$,
and $g_i(w) = p_{W_i}(w)$ is the corresponding marginal.
Finally, let
\begin{align}
	I(W_i;X_i) &= E \left[ \log \frac{p_{W_i | X_i}(w \given x)}{p_{W_i}(w)} \right] \nonumber\\
	&= E \left[ \log \frac{g_i(w \given x)}{g_i(w)} \right] ,\nonumber
\end{align}
and let $I_0 = \min_i I(W_i;X_i)$.
It is straightforward to show that $I_0 > 0$ so long as $p_N(0) > 0$. 
Then
\begin{align}
	\label{eqn:mtOmega1}
	I(Y^t ; X^t) & \geq I(W^t;X^t) \\
	\label{eqn:mtOmega2}
	&= E \left[ \log \frac{p_{W^t | X^t}(w^t \given x^t)}{p_{W^t}(w^t)} \right] \\
	\label{eqn:mtOmega3}
	&\geq E \left[ \log  \frac{\prod_{i=1}^t g_i (w \given x)}{\prod_{i=1}^t g_i (w)} \right] \\
	\label{eqn:mtOmega4}
	&= \sum_{i=1}^t E \left[ \log \frac{g_i (w \given x)}{g_i (w)} \right] \\
	\label{eqn:mtOmega5}
	&= \sum_{i=1}^t I(W_i;X_i) \\
	\label{eqn:mtOmega5}
	&\geq t I_0 ,
\end{align}
where (\ref{eqn:mtOmega1}) follows from the data processing inequality,
(\ref{eqn:mtOmega2}) follows from the definition of mutual information,
and (\ref{eqn:mtOmega3}) follows from the auxiliary channel lower bound for mutual information
(see \cite{arn06}).
Finally, from the last line, $I(Y^t;X^t) =~\Omega(t)$.


To generalize beyond $\alpha = 1$, clearly if $\alpha > 1$ these arguments still apply, since
$m \leq t < \alpha t$. If $\alpha < 1$, we restrict the input to use only $1/\alpha$ of the
time instants, sending nothing at the remaining times; in this case, the final line
in~(\ref{eqn:mtOmega5}) becomes $I(Y^t ; X^t) \geq \alpha t I_0$, which is still $\Omega(t)$.


%
%
%

\bibliographystyle{ieeetr} 
\bibliography{MolecularInfoTheory,infotheory,ref_Mole_Chae}

\end{document}